\newtheoremstyle{break}
  {\topsep}{\topsep}%
  {\itshape}{}%
  {\bfseries}{}%
  {\newline}{}%
\theoremstyle{break}
\newtheorem{theorem}{Theorem}
\newtheoremstyle{break}
  {\topsep}{\topsep}%
  {\itshape}{}%
  {\bfseries}{}%
  {\newline}{}%
\theoremstyle{break}
\newtheorem{prop}{Proposition}
\newcommand{\rmm}{\ensuremath{\mathbbm{R}^m}\xspace}
\newcommand{\tmm}{\ensuremath{\mathcal{T}_{m+3}}\xspace}
\DeclareMathOperator*{\argmin}{arg\,min}
\newcommand{\blind}{1}
\begin{document}

\def\spacingset#1{\renewcommand{\baselinestretch}%
{#1}\small\normalsize} \spacingset{1}

%%%%%%%%%%%%%%%%%%%%%%%%%%%%%%%%%%%%%%%%%%%%%%%%%%%%%%%%%%%%%%%%%%%%%%%%%%%%%%

\if1\blind
{
  \title{\bf Confidence sets for phylogenetic trees}
  \author{Amy Willis \hspace{.2cm}\\
    Department of Biostatistics, University of Washington}
  \maketitle
} \fi

\if0\blind
{
  \bigskip
  \bigskip
  \bigskip
  \begin{center}
    {\LARGE\bf Confidence sets for phylogenetic trees}
\end{center}
  \medskip
} \fi

\bigskip
\begin{abstract}
Inferring evolutionary histories (phylogenetic trees) has important applications in biology, criminology and public health. However, phylogenetic trees are complex mathematical objects that reside in a non-Euclidean space, which complicates their analysis. While our mathematical, algorithmic, and probabilistic understanding of phylogenies in their metric space is mature, rigorous inferential infrastructure is as yet undeveloped. In this manuscript we unify recent computational and probabilistic advances to construct tree--valued confidence sets. The procedure accounts for both centre and multiple directions of tree--valued variability. We draw on block replicates to improve testing, identifying the best supported most recent ancestor of the Zika virus, and formally testing the hypothesis that a Floridian dentist with AIDS infected two of his patients with HIV. The method illustrates connections between variability in Euclidean and tree space, opening phylogenetic tree analysis to techniques available in the multivariate Euclidean setting.
\end{abstract}

\noindent%
{\it Keywords: phylogenies; statistical inference; simultaneous testing; non-Euclidean}
\vfill

\newpage
\spacingset{1.45} % DON'T change the spacing!

\section{Introduction} \label{intro}
% incorporate the idea of tree cloud from the reid paper

Evolutionary histories are key data objects in biology, biogeography, criminology, anthropology and immunology. In addition to illuminating interesting ancestral connections, their careful analysis has aided in freeing the innocent \citep{scad}, reducing accidental disease transmission by healthcare providers \citep{ou92}, and identifying perpetrators of willful HIV infection \citep{hill94}. Broader applications in the medical sciences, such as modeling brain and lung networks \citep{amenta2015quantification, bendich2016persistent,feragen2012hierarchical,skwerer2014tree}, further motivate interest.

The development of models for inferring evolutionary histories, or {\it phylogenies}, has become highly sophisticated since its genesis \citep{cavalli, felsenstein:1983vc}. Different models may account for such varied biological possibilities as stochastic coalescence \citep{heledd},  gene duplication  \citep{rasmussen2012unified}, and hybridization \citep{gerard2011estimating}. However, it may not be the case that there is a single evolutionary history unanimously implied by all genetic loci, and different genetic sites may conflict with respect to the implied phylogeny \citep{reid}. Between the 1960's and the early 2000's this line of research focused on {\it consensus} methods, which unify collections of evolutionary histories into a single tree.
However, rather than summarizing the collection by a single tree, it is now common to analyze the collection of trees. There are many modern methods that utilize the elegant mathematics of tree space to propose new analysis tools that generate new insights \citep{bbb, nye11,kdetrees}.

Despite great gains made with respect to exploring (a) the center of a collection of trees \citep{bbb}, and (b) the directions of their variability \citep{nye11,Nye:2014eh, nye2016principal}, the literature lacks a statistical method that simultaneously considers both these issues \citep{owen17}.
The approach presented in this paper quantifies multiple directions of variability along with centre, offering a new perspective on  the relationship between tree space and $\mathbbm{R}^n$ and providing a usable solution to the important open problem of constructing confidence sets for phylogenetic trees \citep{holmes2003bootstrapping,Holmes:2005wu,lubiw2016shortest}.

We begin with an overview of existing mathematical infrastructure, including  tree space, manifolds and homeomorphisms,  and central limit theorems (Section~\ref{infrastructure}), before developing the necessary statistical infrastructure (Section~\ref{method}). We describe our confidence set construction procedure in detail (Section~\ref{procedure}) and investigate coverage (Section~\ref{coverage}), then demonstrate its utility for detecting splits of weak and strong support and in tree-valued hypothesis testing (Section~\ref{examples}). Among our examples we investigate the biogeography of the Zika virus as well as a forensics investigation. A discussion of  experimental design for phylogenetics and other questions raised by this method, as well as  the relationship between statistics on tree space and Euclidean space,  concludes the paper (Sections~\ref{discussion} and \ref{conc}).

\section{Structure} \label{infrastructure}

The key innovation of this paper is the construction of confidence sets for phylogenetic tree-valued parameters. We progress the major advances presented by \cite{blow2} into a complete statistical framework for inference. To do this we rely on a number of mathematical constructions, including tree space, Fr\'{e}chet means, and the tree-log-map. While this section is not intended to be self-contained, and we refer the reader to the references for more details, we briefly review some necessary concepts and introduce notation and a new concept of tree set measurability.

\subsection{Tree space}
The metric space of phylogenetic trees $(\tmm, d)$, or {\it tree space}, is a complete separable metric space \citep{bhv, Benner:2014uy,cptmle} that permits comparison between phylogenetic trees with the same leaf set of cardinality $m+3$. The distance $d(T_i, T_j)$ between two trees $T_i$ and $T_j$ accounts for differences with respect to both their tree topologies (branching structure) and branch lengths. The space is constructed by representing each of the $(2m+1)!!$ possible tree topologies by a single non-negative Euclidean orthant of dimension $m$ (the largest  possible number of internal branches). The orthants are then ``glued together'' \cite[p. 12]{bhv} along appropriate boundaries. Specifically, nearest neighbor interchange (NNI) topologies lie in adjacent non-negative orthants along the boundary corresponding to the collapse of the relevant NNI edge. Orthants and orthant boundaries are also called {\it strata}, and trees with the largest possible number of internal branches are said to fall in {\it top-dimensional strata} while trees with less than the largest possible number of internal branches fall in {\it co-dimensional strata}. Figure \ref{tree_space_illustration} shows the structure of tree space with 5 leaves, $\mathcal{T}_5$,  around a single co-dimension 1 stratum, with the 3 associated NNI topologies. Construction of the space is due to \cite{bhv}.\footnote{For generality we consider trees to be unrooted, though by designating a particular leaf as the root the restriction to rooted trees is trivial. Note that \cite{blow2} considered rooted trees with $m$ internal edges;  this difference accounts for our use of \tmm, in contrast to their use of $\mathcal{T}_{m+2}$.}

\begin{center}
\begin{figure}
\includegraphics[trim = 3cm 3cm 3cm 3cm, scale=0.7]{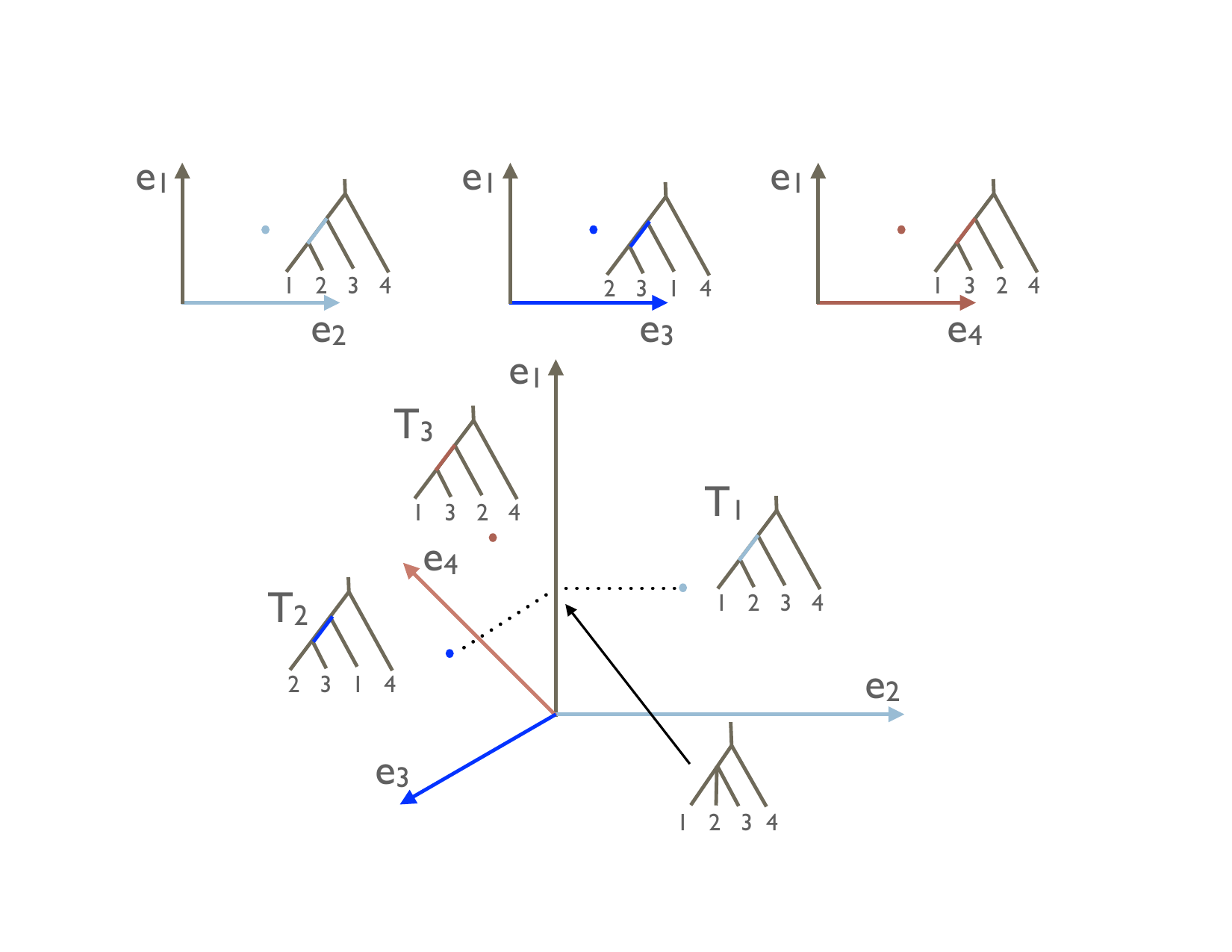}
\caption{The structure of tree space with 5 leaves, $\mathcal{T}_5$,  around a single co-dimension 1 stratum. Trees $T_1$, $T_2$ and $T_3$ mutually differ by a nearest neighbor interchange (NNI) move, and hence the orthants (or top-dimensional strata) associated with their topologies are connected along the co-dimension 1 stratum. The dotted line between $T_1$ and $T_2$ is the unique shortest path between these trees, and the color coding refers to the branches and branch lengths of the trees. }
\label{tree_space_illustration}
\end{figure}
\end{center}

The space is {\it nonpositively curved} \citep{bhv, Bridson:1999ky}, which has resulted in  algorithms for calculating geodesic paths \citep{owenfast}, means \citep{bacak,mopr} and principal paths \citep{Nye:2014eh}. Furthermore, non-positive curvature (NPC) of the space results in unique {\it Fr\'{e}chet means}: for a collection of trees $T_1, \ldots, T_n \in \tmm$, the sample Fr\'{e}chet mean
\begin{align}
\hat{T}_n= \hat{T}_n(T_1,\ldots,T_n) := \argmin_{t \in \tmm} \sum_{i=1}^n d(T_i, t)^2 \label{smean}
\end{align}
is guaranteed to be unique \citep{sturm}. %This property is an important consideration in investigation of statistics on the space:  ensuring identifiability of the target of inference is key to the success of many statistical procedures \citep{wa49,re81}.

\subsection{Probability triples}
Given guarantees of tree mean uniqueness, we turn our attention to their inference. To maintain  rigor, a number of generalizations need to be specified. In the context of tree-valued Brownian motion, \cite{nye2015convergence} explicitly constructs a $\sigma$-algebra on the metric space of paths in tree space generated by the $\ell_\infty$  metric. However, here we require the algebra on the space of trees itself. %Note that the use of a underlying probability triple on $\tmm$ was implicit in \cite{blow2}, though here we choose to be more explicit.

We begin by constructing a $\sigma$-algebra on $\tmm$. Because \tmm is a metric space we have a well-defined concept of open balls, and  can generate the Borel algebra by their countable unions, intersections and relative compliments. By construction, this is a $\sigma$-algebra.

Enumerate the tree topologies $j = 1,\ldots, (2 m+1)!!$. With our  $\sigma$-algebra  we can now define a measure on this algebra by writing any set  in \tmm in the form $A = A_0 \cup\left(\bigcup_{j=1}^{(2m+1)!!} A_j\right)$ where $A_j$ is a collection of trees with the $j$-th topology, and $A_0$ is a collection of trees  with one or more internal vertices of degree 4 or greater  (equivalently, trees on the orthant boundaries of \tmm). We then define $\nu(A) = \sum_{j= 1}^{(2 m+1)!!}\nu_B(A_j)$ for $\nu_B$ the Euclidean Borel measure of dimension $m$. This preserves $\sigma$-additivity because the Borel measure of any orthant boundary is zero, and there are only finitely many such boundaries. This construction is not a complete measure space, so we append all sets of  measure zero to give an analogue of Lebesgue-measurable sets in \tmm, which we call $\mathcal{L}(\tmm)$. The latter construction will be implicitly  used as our $\sigma$-algebra henceforth, and we have a  complete measure space $(\Omega, \mathcal{L}(\tmm), \nu)$. Finally, for a probability measure $F: \mathcal{L}(\tmm) \rightarrow [0,1]$, defined with respect to the volume measure $\nu$, we obtain a  probability triple  $(\Omega, \mathcal{L}(\tmm), F)$.

%% point for the future: mapping of measurable sets to measurable sets in rn via logmap

\subsection{Limit theorems and the log map}
While theory for central limit theorems for Fr\'{e}chet means on metric spaces has been well-developed \citep{Bhattacharya:vk}, these generally rely on homeomorphisms to $\mathbbm{R}^n$ from measurable subsets of the space known to contain the true Fr\'{e}chet mean $\mu$ of the probability measure $F$, $$\mu = \argmin_{u \in \tmm} \int d(q, u)^2 F(dq).$$ However, due to the stratified structure of \tmm, inverse functions will not exist for candidate homeomorphisms except restricted to subsets wholly contained in a single orthant. Thus without assuming the topology of the true mean {\it a priori}, general results for CLTs  on manifolds are insufficient for tree mean inference.

To overcome these difficulties, \cite{blow2} developed a mapping from \tmm to \rmm and proved mapped multivariate normality of the sample Fr\'{e}chet mean around the true Fr\'{e}chet mean, deriving expressions for the covariance based on the distribution $F$ (see also \cite{barden2017}). While the proposed mapping, called the log map, is not a homeomorphism, it elegantly deals with samples from multiple topologies, taking advantage of similarities between tree space and Euclidean space while respecting their different combinatorial structures.

The log map function of \cite{blow2}, $\log_{T^*}(T)$, captures both the distance and direction from a base tree $T^*$ to a target tree $T$. For now, we consider only base trees off the orthant boundaries, though we return to this issue in Sections  \ref{turtles} and \ref{discussion}. %, and the modified log map function.... %To emphasize it's bivariate nature we will refer to it as $w(T^*, T)$ where appropriate.
Formally, $\log_{T^*}(T):  \tmm \rightarrow \rmm$ is defined as $$\log_{T^*}(T)=d(T ^*,T)\mathbf{v}_{T^*}(T),$$ where $d(T^*,T)$ is the geodesic distance between $T^*$ and $T$, and $\mathbf{v}_{T^*}(T)$ is a specifically chosen unit vector from $T^*$ to $T$ that reflects the direction of the first segment of the geodesic (details below). The modified log map (MLM) is a translation of the log map. It positions this vector to originate from the base tree, $$\Phi_{T^*}(T) = \log_{T ^*}(T)+\mathbf{t}^*,$$ for $\mathbf{t}^*$ the coordinates in \rmm  of  $T ^*$'s internal edge lengths. Note that while $T^*$ is a phylogenetic tree, $\mathbf{t}^*$ is a vector. For a description of the correct permutations of the ordering of the edge lengths necessary to maintain invariance across multiple argument topologies, see \cite{blow2}.

\begin{center}
\begin{figure}
\includegraphics[trim = 6cm 0cm 0cm 0cm, scale=0.7]{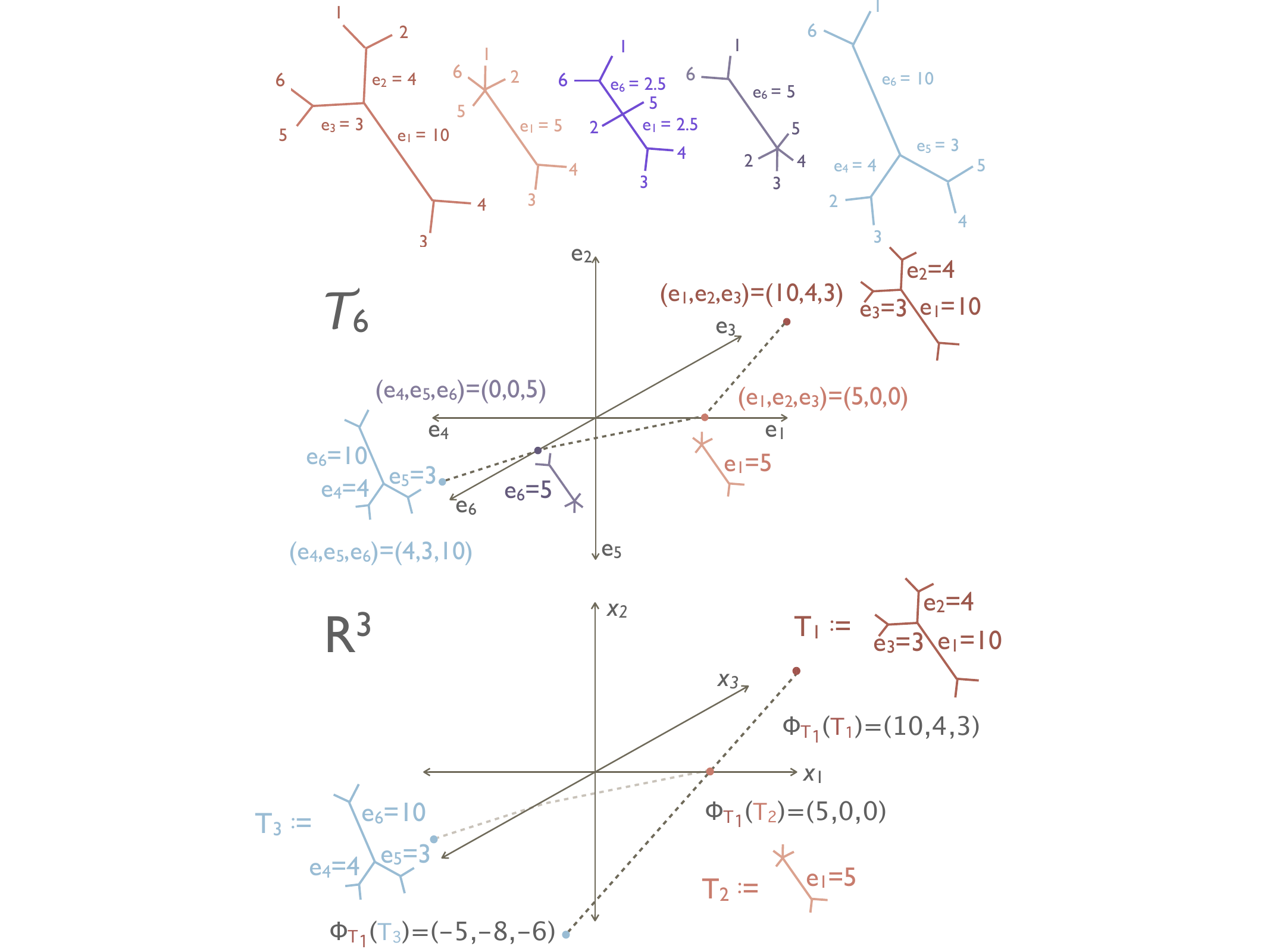}
\caption{The geodesic path between 2 trees  with 6 leaves (top panel), a representation of this path in $\mathcal{T}_6$ (middle panel), and the modified log map (MLM) with respect to the  tree $T_1$  (bottom panel). The distance between $T_1$ and $T_3$ is $15\sqrt{2}$, hence  $\Phi_{T_1}(T_3)=(10,4,3)+15 \sqrt{2} \frac{(5 , 0 , 0)-(10, 4, 3)}{||(5 , 0 , 0)-(10, 4, 3)||}=(-5, -8, -6).$ %The $\mathcal{T}_6$ panel illustrates that geodesics bend across the orthant boundaries.
The log map captures both the length and direction of the geodesic by extending the first linear segment of the geodesic path.}
\label{logmap_illustration}
\end{figure}
\end{center}

The intuition behind $\mathbf{v}_{T^*}(T)$ can best be illustrated via the MLM, and may be seen in Figure~\ref{logmap_illustration}. For a target tree in the same orthant as the base tree (identical topologies), the MLM coincides with a Euclidean representation of the target tree, that is, is an $m$-vector with all positive components reflecting the lengths of the target tree's internal branches. For a target tree in an adjacent orthant (nearest neighbor interchange topology), the MLM vector has a single negative component with magnitude equal to the length of the branch present on the target tree but not present on the base tree, with the remaining components positive (adjusted to reflect the branch lengths of the target tree). If the target tree is more topologically distinct than a NNI interchange from the base tree, the simplest visualization is to follow the initial segment of the geodesic path (the segment contained in the same orthant as the base tree) for the length of the geodesic across (potentially more than one) Euclidean orthant boundaries (Figure~\ref{logmap_illustration}).
Formally, denote the support of the geodesic (the sequence of orthants that the geodesic path traverses; see \cite{owenfast}) between $T^*$ and $T$ by $\mathcal{A}=(A_1, \ldots, A_k)$ and $\mathcal{B}=(B_1, \ldots, B_k)$, and choose $\lambda$ to be any positive number strictly less than $ \frac{||A_1||}{||A_1||+||B_1||}$. Let $S^*_\lambda$ be the tree that is of fraction $\lambda$ along the geodesic path between $T^*$ and $T$, noting that, by construction, $S^*_\lambda$ has the same topology as $T^*$ \citep{owenfast}. Then
$$\Phi_{T^*}(T) = \mathbf{t}^* + \frac{\mathbf{s}^*_\lambda - \mathbf{t}^*}{|\mathbf{s}^*_\lambda - \mathbf{t}^*|} d (T^*, T),$$
where $\mathbf{s}^*_\lambda$ and $\mathbf{t}^*$ are the coordinates in $\mathbbm{R}^m$ of the edge lengths of $S^*_\lambda$ and $T^*$. Note that the particular choice of $\lambda$  does not affect $\Phi_{T^*}(T)$ because  $\frac{\mathbf{s}^*_\lambda - \mathbf{t}^*}{|\mathbf{s}^*_\lambda - \mathbf{t}^*|}$ is a unit vector.

It is important to note that the MLM  is not a bijection. If any coordinate in the MLM is negative, the argument tree cannot be uniquely recovered. In this way, the MLM does not preserve topological information. A detailed illustration of this issue is given in the Supplementary Appendix.

%Equivalently,  in the notation of \cite{owenfast},
%To employ the notation of \cite{owenfast}, this is equivalent to finding any tree of fraction $\lambda$ along the geodesic, for $$0 < \lambda < \frac{||A_1||}{||A_1||+||B_1||}$$ (obtained by solving Theorem 2.4 of \cite{owenfast} for $i=0$), representing this tree as a Euclidean vector $\mathbf{s}^*$ with all positive coordinates and the same edge ordering as $\mathbf{t}^*$, then  calculating the log map as $$\Phi_{T^*}(T) = \mathbf{t}^* + \frac{\mathbf{s}^* - \mathbf{t}^*}{|\mathbf{s}^* - \mathbf{t}^*|} d (T^*, T).$$ Note that $A_1$ and $B_1$  above refer to the first elements of the partitions of the geodesic support of the geodesic path between $T^*$ and $T$ (as in \cite{owenfast}). The particular choice of $\lambda$  does not affect $\Phi_{T^*}(T)$ because  $\frac{\mathbf{s}^* - \mathbf{t}^*}{|\mathbf{s}^* - \mathbf{t}^*|}$ is a unit vector. Henceforth we will refer to the MLM  as the log map for convenience.

\section{Statistical infrastructure} \label{method}

Many modern investigations in phylogenetics give rise to tree-valued information, where each of several sources or combinations of data suggest their own, possibly conflicting evolutionary histories. Thus, as the primary data objects for our method, consider $n$ unrooted trees sampled on the same collection of $m+3$ taxa, $m \geq 1$. Thus we have $T_1, T_2, \ldots, T_n \in \tmm$ as our observations.

We consider these observations to be observed {\it iid} from some unknown distribution $F$ on \tmm which represents the variability in the set of  trees (which, if desired, could be built up through a stochastic model for nucleic acid base substitutions or a population dynamics model).  %Suppose that $F$ has a finite Fr\'{e}chet function $G(u) = \int_{\tmm} d(q, u)^2 F(dq)$, and Fr\'{e}chet mean $T^* = \argmin_{u \in \tmm} G(u)$ in a top-dimensional stratum. Using the notation defined in Section \ref{infrastructure}, we state our key theorem.
We now state the key result on which our method is based.

\begin{theorem} \label{central}

Let $T_1, \ldots, T_n \in \tmm$ be observed {\it iid} from some distribution $F$. Suppose that $F$ has a finite Fr\'{e}chet function $G(u) = \int_{\tmm} d(q, u)^2 F(dq)$, and Fr\'{e}chet mean $T^* = \argmin_{u \in \tmm} G(u)$ in a top-dimensional stratum. Furthermore, suppose $F$ satisfies $F(D_{T^*})=0$, where $D_{T^*}$ is the set of trees that lie on the boundaries of the maximal cells of $T^*$ (\citet[Definition 2]{blow2}). Then, for the sample Fr\'{e}chet mean $ \hat{T}_n$, as $n$ becomes large,
%$$\sqrt{n}\left( \Phi_{\hat{T}_n}(\hat{T}_n) - \Phi_{T^*}(T^*)\right) \overset{\mathcal{D}}{\rightarrow} \mathcal{N}\left(0, \Sigma\right),$$
$$\sqrt{n}\left( \hat{T}_n - T^*\right) \overset{\mathcal{D}}{\rightarrow} \mathcal{N}\left(0, \Sigma\right),$$
for $\Sigma$ a covariance matrix determined by the structure of $F$.
\end{theorem}

This result is due to \citet[Theorem 2]{blow2}.  Here we use  $\Sigma$ to denote the covariance of the distribution in order to emphasize that this matrix is estimable (discussed in Section \ref{covariance_estimation}).  The notation $\left( \hat{T}_n - T^*\right)$ refers to $\left(\Phi_{\hat{T}_n}(\hat{T}_n)-\Phi_{T^*}(T^*)\right)$ subject to the ordering of the components of the vector-valued function $\Phi_{\hat{T}_n}(\cdot)$ matching the order of the components of the vector-valued function $\Phi_{T^*}(\cdot)$. Note that the proof of the theorem by \cite{blow2} accounts for uncertainty in estimating both $T^*$ and $\Phi_{T^*}(\cdot)$.

This theorem forms the foundation of our tree inference procedure. However, we must discuss estimation of $T^*$, $\Phi_{T^*}(\cdot)$ and $\Sigma$ in order to bridge the gap between the existing probability theory and statistics. We address these in turn below.

\subsection{Mean tree}

Given Theorem \ref{central}, a natural candidate for estimating $T^*$ is $\hat{T}_n$, the sample mean (see Eq.~(\ref{smean})). Furthermore, \citet[Theorem p.~592]{Ziezold:wo} guarantees strong \tmm--consistency of the (unique) sample Fr\'{e}chet mean  under weaker conditions than \citet[Theorem 2]{blow2} and Theorem \ref{central}. Efficient algorithms for sample Fr\'{e}chet mean computation lend an additional appealing quality \citep{bacak, mopr}, as do the results of \cite{bbb}.

As statistical inference for tree space develops, estimators of $T^*$  other than $\hat{T}_n$ are likely to be developed, supported by their own CLTs and LLNs. The procedure described in Section~\ref{procedure} would be unchanged  if the practitioner were to use a different normally distributed  true Fr\'{e}chet mean estimate.

\subsection{Log-map function}

Masked by notation is the need to estimate $\Phi_{T^*}(\cdot)$, the log-map function. This function is fully determined by its base tree $T^*$,  which must be estimated. The above discussion regarding tree mean estimation points to estimating the function $\log_{T^*}(\cdot)$ by $\log_{\hat{T}_n}(\cdot)$, however, functional consistency needs to be verified. Since \tmm is not Hilbert, and the function under question is not linear, we do not know of any results that immediately guarantee this.
\begin{prop}
Consider trees $T_1, \ldots, T_n \in \tmm$ drawn from some distribution $F$ with true Fr\'{e}chet mean $T^*$, not located on an orthant boundary,  that satisfies
\begin{align}
\int_{\tmm} d (\mathbf{0}_{\tmm}, T)^2\ dF(T) < \infty, \label{condition}
\end{align}
for $\mathbf{0}_{\tmm}$ the origin (star tree) in \tmm.
For a fixed tree $t \in \tmm$, consider the function $\log_{(.)}(t):\tmm \rightarrow \rmm$. Then for all $t \in \tmm$ we have that $\Phi_{\hat{T}_n}(t)$ converges almost surely to $\Phi_{T^*}(t)$ in $\rmm$ as $n \rightarrow \infty$.
\end{prop}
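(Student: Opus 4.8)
The plan is to split the statement into two ingredients and then invoke the continuous mapping theorem for almost sure convergence: (i) strong consistency of the base point, $\hat{T}_n \to T^*$ almost surely in \tmm, and (ii) continuity of the map $S \mapsto \Phi_S(t)$ at $S = T^*$ for each fixed target $t$. For (i), condition~\eqref{condition} together with $\gamma(q,u)^2 \le 2\gamma(\mathbf{0}_{\tmm},q)^2 + 2\gamma(\mathbf{0}_{\tmm},u)^2$ shows that the Fr\'{e}chet function $G(u)=\int \gamma(q,u)^2\,dF(q)$ is finite for every $u$, so the hypotheses of \citep[Theorem p.~592]{Ziezold:wo} are met and $\hat{T}_n \to T^*$ almost surely. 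Everything then rests on (ii).

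To analyse (ii) I would write $\Phi_S(t) = \mathbf{s} + \gamma(S,t)\,\mathbf{v}_S(t)$, where $\mathbf{s}$ is the edge--length coordinate vector of $S$ and $\mathbf{v}_S(t)$ is the unit initial direction of the geodesic from $S$ to $t$, and treat the three factors separately. Since $T^*$ lies off the orthant boundaries, fix $\epsilon_0>0$ small enough that the open ball $B(T^*,2\epsilon_0)$ is contained in the interior of the single top--dimensional orthant $O$ carrying $T^*$, and (for the fixed $t\neq T^*$) also small enough that $\epsilon_0 < \gamma(T^*,t)$. Almost surely $\hat{T}_n \in B(T^*,\epsilon_0)\subset O$ for all large $n$, so $\hat{T}_n$ shares the topology of $T^*$, the coordinate map $S\mapsto\mathbf{s}$ is continuous at $T^*$, and since $\gamma(\cdot,t)$ is $1$--Lipschitz we have $\gamma(\hat{T}_n,t)\to\gamma(T^*,t)$.

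The genuinely delicate factor is $\mathbf{v}_S(t)$, because across the stratified geometry of \tmm the combinatorial type of a geodesic can change as its endpoints move. The argument sidesteps this by working at a fixed small arc length inside $O$. For $S \in B(T^*,\epsilon_0)$ let $\Gamma_S(\epsilon_0)$ denote the point at arc length $\epsilon_0$ along the unit--speed geodesic from $S$ to $t$; it lies within distance $\epsilon_0$ of $S$, hence in $B(T^*,2\epsilon_0)\subset O$, so the initial portion of the geodesic is a Euclidean segment in $O$ and $\mathbf{v}_S(t)=\epsilon_0^{-1}\big(\mathrm{coords}(\Gamma_S(\epsilon_0))-\mathbf{s}\big)$. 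Because \tmm is globally non--positively curved \cite{bhv} it is uniquely geodesic and its geodesics vary continuously with their endpoints; parametrising proportionally to arc length, $\Gamma_S(\epsilon_0)$ is the geodesic evaluated at parameter $\epsilon_0/\gamma(S,t)$, which converges to $\epsilon_0/\gamma(T^*,t)\in(0,1)$ as $S\to T^*$, so $\Gamma_S(\epsilon_0)\to\Gamma_{T^*}(\epsilon_0)$ and, composing with the continuous coordinate map, $\mathbf{v}_S(t)$ is continuous at $T^*$.

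Assembling the three factors, $S\mapsto\Phi_S(t)=\mathbf{s}+\gamma(S,t)\mathbf{v}_S(t)$ is continuous at $T^*$, and the continuous mapping theorem applied along $\hat{T}_n\to T^*$ almost surely gives $\Phi_{\hat{T}_n}(t)\to\Phi_{T^*}(t)$ almost surely. The degenerate case $t=T^*$ needs no direction argument: there $\gamma(\hat{T}_n,t)\to 0$ multiplies the bounded unit vector $\mathbf{v}_{\hat{T}_n}(t)$, so $\Phi_{\hat{T}_n}(T^*)=\hat{\mathbf{t}}_n+\gamma(\hat{T}_n,T^*)\mathbf{v}_{\hat{T}_n}(T^*)\to\mathbf{t}^*=\Phi_{T^*}(T^*)$. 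I expect the continuity of the direction vector to be the main obstacle, with the non--positive curvature of the space doing the essential work by guaranteeing that whole geodesics, and hence the point at a fixed small arc length, move continuously with the base tree.
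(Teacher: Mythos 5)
Your proof is correct, and its skeleton coincides with the paper's: both reduce the claim to strong consistency $\hat{T}_n \rightarrow T^*$ via \cite{Ziezold:wo} under condition~\eqref{condition}, followed by continuity at $T^*$ of the three ingredients of $\Phi_S(t) = \mathbf{s} + \gamma(S,t)\,\mathbf{v}_S(t)$, namely the coordinate vector, the distance, and the initial direction vector. The difference lies in how the delicate third ingredient is discharged. The paper merely asserts that $\mathbf{v}_{(\cdot)}(t)$ is continuous, notes that the only danger is a change in the carrying orthant sequence of the geodesic, and refers to the construction and Theorem 2 of \cite{blow2}; no geometric argument appears in-text. You instead prove it from first principles: fixing $\epsilon_0$ so small that $B(T^*,2\epsilon_0)$ sits inside the open orthant of $T^*$, you observe that the geodesic from any nearby base $S$ to $t$ is a Euclidean segment up to arc length $\epsilon_0$, so that $\mathbf{v}_S(t)=\epsilon_0^{-1}\left(\mathrm{coords}(\Gamma_S(\epsilon_0))-\mathbf{s}\right)$, and you then invoke the convexity of the metric in a non-positively curved space \cite{bhv} (geodesics with a common endpoint, parametrized proportionally to arc length, satisfy $\gamma(\sigma_1(s),\sigma_2(s)) \le (1-s)\,\gamma(\sigma_1(0),\sigma_2(0))$) to conclude that $\Gamma_S(\epsilon_0)$ varies continuously with $S$. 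This neatly sidesteps the combinatorics: your argument never needs the carrying orthant sequence to stabilize as $S$ varies, only the initial segment of the geodesic to remain in the base orthant. The payoff is a self-contained proof that rests on the CAT(0) property rather than on the internal details of \cite{blow2}, and it makes rigorous exactly the step the paper settles by citation; you also treat the degenerate case $t=T^*$ explicitly, which the paper does not single out. One trivial repair: you should take $\epsilon_0 \le \gamma(T^*,t)/2$ rather than $\epsilon_0 < \gamma(T^*,t)$, so that $\gamma(S,t) \ge \epsilon_0$ and the arc-length-$\epsilon_0$ point exists for every $S \in B(T^*,\epsilon_0)$.
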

\begin{proof}
Almost sure convergence is preserved by addition and multiplication,  thus it is sufficient to show that (a) $ \hat{\mathbf{t}}_n^* \overset{a.s.}{\rightarrow}  \mathbf{t}^*$ in \rmm, (b) $d(\hat{T}_n, t) \overset{a.s.}{\rightarrow} d(T^*, t)$ in $\mathbbm{R}$, and (c) $\mathbf{v}_{\hat{T}_n}(t) \overset{a.s.}{\rightarrow} \mathbf{v}_{T^*}(t)$ in $\mathbbm{R}$ for all $t$.
From \cite{Ziezold:wo}, combined with the assumption that $\int_{\tmm} d (\mathbf{0}_{\tmm}, T)^2\ dF(T)$ $< \infty$, we have that $\hat{T}_n \overset{a.s.}{\rightarrow} T^*$ in \tmm. Hence for $n>m$ for some $m$,  we are guaranteed that the sample mean will be in the same orthant as the true mean and thus for $n > m$, $|\hat{\mathbf{t}}_n^* - \mathbf{t}^*| = d(\hat{T}_n, T^*) \overset{a.s.}{\rightarrow} 0$, hence (a). $d(\cdot, \cdot)$ is a bicontinuous function (metrics are guaranteed this property), and thus the continuous mapping theorem guarantees (b). For (c),  we claim $\mathbf{v}_{(\cdot)}(t)$  is a continuous function. We need only consider possible discontinuities in the carrying orthant sequence ($d(\cdot, \cdot)$ may be bicontinuous but this is no guarantee that the path will be). However, as noted by
\cite{blow2},
the polyhedral subdivision varies continuously with respect to the base tree, and directional derivatives of the log map are well-defined. Hence the carrying orthant sequence is continuous, and (c) follows.
\end{proof}

Thus we estimate the true modified log map function with the sample modified log map function.

\subsection{Covariance estimation}  \label{covariance_estimation}

Perhaps the most substantial complication when transitioning from Theorem \ref{central} to a statistical inference procedure is the estimation of the covariance matrix $\Sigma$. While this matrix can (theoretically) be calculated given known $F$, we wish to avoid strong assumptions on $F$, and in particular, we wish to avoid {\it structural} assumptions. Furthermore, calculating this matrix in practice would require integration across multiple orthants, and despite the advances in tree integration by \cite{Weyenberg:2016ti} and \cite{weyenberg2016normalizing}, this will generally be intractable for distributions with full support on the space.

The approach we present here takes advantage of the intrinsic connection between tree space and Euclidean space. Consider our estimate of $\Phi_{T^*}(\cdot)$, $\Phi_{\hat{T}_n}(\cdot)$, and consider the collection of $\mathbbm{R}^m$-valued objects $\Phi_{\hat{T}_n}(T_1),$ $\Phi_{\hat{T}_n}(T_2),$ $\ldots,$  $\Phi_{\hat{T}_n}(T_n)$, which we know has sample mean $\Phi_{\hat{T}_n}(\hat{T}_n)$ \cite[Lemma 3, setting $\mu(T)=\mathbbm{1}_{\{T\in \{T_i\}\}}(T)$ and the base tree as $\hat{T}_n$]{blow2}. Because $\Phi_{\hat{T}_n}(\cdot)$ collapses the structure of tree space around $\hat{T}_n$,  the vectors $\Phi_{\hat{T}_n}(T_1),$ $\Phi_{\hat{T}_n}(T_2),$ $\ldots,$  $\Phi_{\hat{T}_n}(T_n)$ give a collection of $n$ $\mathbbm{R}^m$-valued observations that form a point cloud around their mean.

We propose to use an unstructured covariance estimator to estimate $\Sigma$ via the  covariance of $\Phi_{\hat{T}_n}(T_1),$ $\Phi_{\hat{T}_n}(T_2),$ $\ldots,$  $\Phi_{\hat{T}_n}(T_n)$. This does not require knowledge of $F$  nor calculation of integrals over tree space. If a  large tree sample is available, the classic estimator
\begin{align} \label{covariance_unstructured}
S=\frac{1}{n-1} \sum_{i=1}^n \left(\Phi_{\hat{T}_n}(T_i) - \Phi_{\hat{T}_n}(\hat{T}_n)\right) \left( \Phi_{\hat{T}_n}(T_i) - \Phi_{\hat{T}_n}(\hat{T}_n) \right)^T
\end{align}
is unbiased for $\Sigma$ and has distribution that converges to a rescaled Wishart distribution if the $\Phi_{\hat{T}_n}(T_i)$'s are approximately normally distributed \citep{tim_multivariate} (see Section \ref{procedure} for a discussion). It is important to note that $S$ will be  unstable when the number of trees is small compared to the number of leaves on the trees, and stability could be introduced by imposing sparsity on the estimate. %In addition to reducing variance, sparse estimates have the advantage of interpretability, because visualizing  directions of variability across a large number of directions can be simplified by reducing the number of possible directions \citep{hillis2005analysis, Chakerian:2012kj}. While here we do not want to collapse trees to have a smaller number of edges (though in some applications this may be appropriate, e.g. \citet[Section 4]{zairis2016genomic}), this approach may be useful when we wish to determine which splits have the least support.  Note that our confidence set is constructed via the precision matrix (inverse covariance) and thus we prefer sparsity in the precision rather than covariance matrix.

\section{Procedure} \label{procedure}

We now give a description of our proposed confidence set procedure and discuss its underpinning assumptions.

%Having addressed questions of estimation in Section~\ref{method}, we now give a complete description of the  method for confidence set construction.
\begin{enumerate}
\item For trees $T_1, \ldots, T_n \in \tmm$, calculate the mean tree $\hat{T}_n$ using the proximal point algorithm of \citet[ Algorithm 4.2]{bacak} (see also \cite{Benner:2014uy}).
\item Calculate the Euclidean projections  under the sample MLM: $\Phi_{\hat{T}_n}(T_1),$ $\Phi_{\hat{T}_n}(T_2),$ $\ldots,$  $\Phi_{\hat{T}_n}(T_n)$.
\item Estimate the true tree $T^*$ by $\hat{T}_n$, and the precision matrix $\Sigma^{-1}$  by $S^{-1}$ (Eq.~\ref{covariance_unstructured}) or with a sparse estimate.
\item Define
\begin{align}
\overline{\Phi_{\hat{T}_n}(T)}  = \frac{1}{n}\sum_{i=1}^n {\Phi_{\hat{T}_n}(T_i)}  \notag,
\end{align}
an object in \rmm.
Construct a $100(1-\alpha)$\% confidence hull for $T^*$, the true Fr\'{e}chet mean of the data generating distribution, via
\begin{align}
A= \bigg\{ T_0 &\in \tmm: \label{set} \\ \notag
&\left(\overline{\Phi_{\hat{T}_n}(T)} - \Phi_{\hat{T}_n}(T_0) \right)^T S^{-1} \left(\overline{\Phi_{\hat{T}_n}(T)} - \Phi_{\hat{T}_n}(T_0) \right)< \frac{m(n-1)}{n (n-m)} F_{m, n-m} (1-\alpha)  \bigg\}, \notag
\end{align}
modifying the pivot distribution or degrees of freedom as appropriate for a different estimator of the precision. The above combines Euclidean multivariate results from \cite{tim_multivariate} with the tree CLT of \cite{blow2}.
\item For  a candidate true Fr\'{e}chet mean tree $T_0$, if $T_0 \in A,$ conclude that $T_0$ is contained in the $100(1-\alpha)$\% confidence set for the true tree mean $T^*$.
\end{enumerate}

%Due to inherent difficulties with visualizing the space of phylogenetic trees, especially for a large leaf set, it is generally not advisable to construct the confidence set in \tmm. In place of this we advocate construction under the MLM, i.e. $\Phi_{\hat{T}_n}(A)$. We give examples of this choice in Section~\ref{examples}. %Confidence sets in tree space can be viewed as having ``confidence bulges'' in orthants corresponding to tree topologies contained in the confidence /set.

It is important to note that the confidence procedure is limited by the disagreement between the distribution of the $\Phi_{\hat{T}_n}(T_i)$'s and the multivariate normal distribution. The extent of this disagreement will depend on the true distribution of the trees, $F$, which is unknown necessarily. However, the coverage simulations of Section \ref{coverage}  suggest that the disagreement is not too severe, at least for low parameter $F$'s. Investigations of more complex cases is an ongoing subject of research. Furthermore, if normality is implausible but the more flexible assumption of ellipticity would suffice, \citet[Theorem 2.1]{sutradhar1989generalization} could be used to derive more appropriate, heavier-tailed asymptotics. Bootstrapping from the sampling distribution (Eq.~\ref{set}) could also be employed, though the trade-off between a weakly violated assumption and failing to adjust for out-of-sample variability \citep{holmes2003bootstrapping} is not clear.

%Another familiar analysis tool that arises from this procedure is to use $S$ (or $(\hat{\Sigma^{-1}})^{-1}$) to estimate the principal  directions of variability in the $\Phi_{\hat{T}_n}(T_1),$ $\ldots,$  $\Phi_{\hat{T}_n}(T_n)$ via principal components analysis. The axes of the \rmm ellipsoid indicate the relative directions of precision, and the ellipsoid can be uniformly shrunk (e.g. by increasing $\alpha$) to be wholly contained in the same orthant as $\hat{T}_n$. This gives an unambiguous indication of the relative confidence in the  edges of the estimated tree. Note that while the procedure is unambiguous about the trees contained in the confidence set for a given confidence level $\alpha$, to move from $S$  to higher order principal components a modification is needed, because for any $x \in \rmm$, $\{T \in \tmm : \Phi_{\hat{T}_n}(T) = x\}$  may contain multiple elements. Thus multiple higher order principal components may need to be searched through to truly ambiguate second, third, and higher  directions of greatest variability.

%This method could be reframed to be  about hypothesis tests rather than confidence intervals.  clearly if a tree falls in the 1-alpha  confidence interval that it will not be rejected for any alpha less than...

\section{Coverage} \label{coverage}

We explore  coverage  for two different phylogenetic trees and a variety of sample sizes. The investigation was conducted by first selecting a tree $\tau$ and using this tree to simulate $1000 \times n$ draws of 350 aligned base pairs under a simple HKY model using seq-gen \citep{rambaut1997seq}, then estimating the $1000 \times n$ trees from the aligned base pairs under a HKY model using phyML \citep{Guindon:2003ba}. These trees were then grouped into 1000 collections of $n$ samples, and for each collection the sample mean was calculated and the confidence set from Eq.~\ref{set} was constructed for each level $\alpha$  of interest.
 The proportion of the 1000 confidence sets that contain the true  Fr\'echet  tree gives the estimated coverage of the $100(1-\alpha)$\% confidence set.
Noting that $\tau$ may not be equal to the true Fr\'echet mean, we approximated the true mean by simulating 100,000  trees  and calculating their sample mean using the proximal point algorithm.

The results of the simulation study are reported in Table \ref{coverage_table} for $\tau$ as the Zika tree (6 leaves, see Figure \ref{zika_figure}) and the HIV tree (5 leaves, see Figure \ref{hiv_figure}). The observed coverage is always close to nominal, and does not deviate by more than 1.3\%. Sample size does not have a consistent effect on coverage, though in general larger sample sizes increase coverage. Additional coverage simulations investigating the effect of longer sequences, different substitution models,  shorter branch lengths, and larger trees are available in the Supplementary Appendix, where we find that none of these factors generally affect coverage, which is consistently close to nominal.

We also report the percentage of sample means that are concordant (i.e. agree in topology) with the true Fr\'echet mean tree, and the percentage of trees that are concordant with the true Fr\'echet mean in Table 1 as ``($\bar{T}_n$, $T_i$) Concordance''. Unsurprisingly (given the results of \cite{Ziezold:wo}), the proportion of Fr\'echet means that agree with the true tree increases with sample size, and this proportion is much higher than the percentage of trees that agree with the true tree. This indicates that Fr\'echet mean trees are superior to individual trees in estimating the true Fr\'echet mean topology.

\begin{table*}
\caption{Estimated coverage of the confidence set procedure when sequence data is generated by an HKY process. %Two different trees from Section \ref{examples} were used to simulate base pair alignments, and then estimates of the true trees were calculated based on the simulated alignments. These estimates were grouped together into 1000 samples of size $n$, and the described procedure was used to construct the confidence set.
The percentage of confidence sets containing the  Fr\'echet mean of the data generating process is reported as Coverage.  Exact coverage would be given by $(90, 95, 99)$. The percentage of sample means that agree in topology with $\tau$ is reported as $\bar{T}_n$ concordance, and the proportion of estimated trees from simulated sequences that agree in topology with $\tau$ is reported as $T_i$ concordance.}
\label{coverage_table}
\begin{tabular}{c|c|c|c}
\hline
$\tau$: True tree  & $n$ & ($\bar{T}_n$, $T_i$) Concordance & Coverage: $\alpha = (0.10, 0.05, 0.01)$ \\
\hline
HIV tree  & 20 &  (98.3, 67.7)  &$(89.0, 93.9, 98.4)$ \\ % done
HIV tree  &50 & (99.6, 68.7)   &$(90.0, 95.3, 99.2)$  \\% done
HIV   tree  & 100  & (99.9, 68.9) &  $(91.2, 95.4, 98.7)$ \\
\hline
Zika tree  &20 &  (98.3, 66.8)  &$(89.8, 94.2, 98.7)$ \\ % done
Zika tree  & 50 & (99.6, 67.3)   &$(91.1, 96.1, 98.9)$  \\% done
Zika   tree  & 100 & (99.9, 67.3)  &  $(89.6,  94.2,  98.6)$ \\%
\hline
\end{tabular}
\end{table*}

\section{Examples} \label{examples}

We now demonstrate that the method gives interesting new analyses in three datasets.
%We have claimed utility of the proposed method for fast identification of edges of least support via the covariance estimate, and for tree-valued inference. We proceed to demonstrate that the method works as intended in three datasets. %investigating a turtle phylogeny dataset for which \cite{Nye:2014eh} identified to have no axis of greatest (tree--valued) variability%

\subsection{Zika origins}
The implications of the Zika virus' spread has caught worldwide attention. The virus is known to have originated in Africa, with media releases in South America purporting that the virus arrived across the Atlantic ocean \citep{zika_article_2,zika_article_1}. We investigate this claim by tracing the biogeography of the current Zika outbreak in South America, concluding strong evidence that the most recent ancestor of the South American outbreak was in fact from the Pacific, whose most recent ancestor was from South-Eastern Asia, and thus that the virus traveled east from Africa, rather than west.

All available complete Zika genome sequences with complete location and year information were obtained from GenBank on June 7, 2016. We categorized the sequences by location and year, resulting in 6 categories (leaves), which are shown in Figure \ref{zika_figure} (left panel). We considered different samples within the same category as block replicates, and drew one sample from each category, aligned the sequences,  and fit a  HKY model to the phylogeny. We repeat this 108 times to have 108 evolutionary histories reflecting the within-virus variability.

Figure \ref{zika_figure} (left panel) shows the sample Fr\'{e}chet mean of the 108 trees. A branch separating recent South American strains and recent Pacific strains is present on the sample mean tree. However, the sample mean tree alone is insufficient to assess if this branch is present on the true mean tree, and for this we employ the proposed confidence procedure.

The MLMs of the 108 trees (relative to the sample Fr\'echet mean tree) are shown in Figure \ref{zika_figure} (right panel), along with the 99.9\% confidence set for the MLM of the true Fr\'echet mean tree. The confidence set for the log mapped tree does not contain any vectors with negative coordinates. Equivalently, the confidence set for the true tree only contains trees with the topology shown in Figure \ref{zika_figure}. In particular, all trees in the confidence set contain a branch that separates the South American and recent Pacific strains of the virus (branch 1). We therefore conclude that the virus travelled to South America via the Pacific, rather than descending from an African strain, corroborating recent results \citep{weaver2016zika,wang2016mosquitos,shen2016phylogenetic}.

\begin{center}
\begin{figure}
\includegraphics[trim = 0.5cm 12cm 0cm 1cm, scale=0.55]{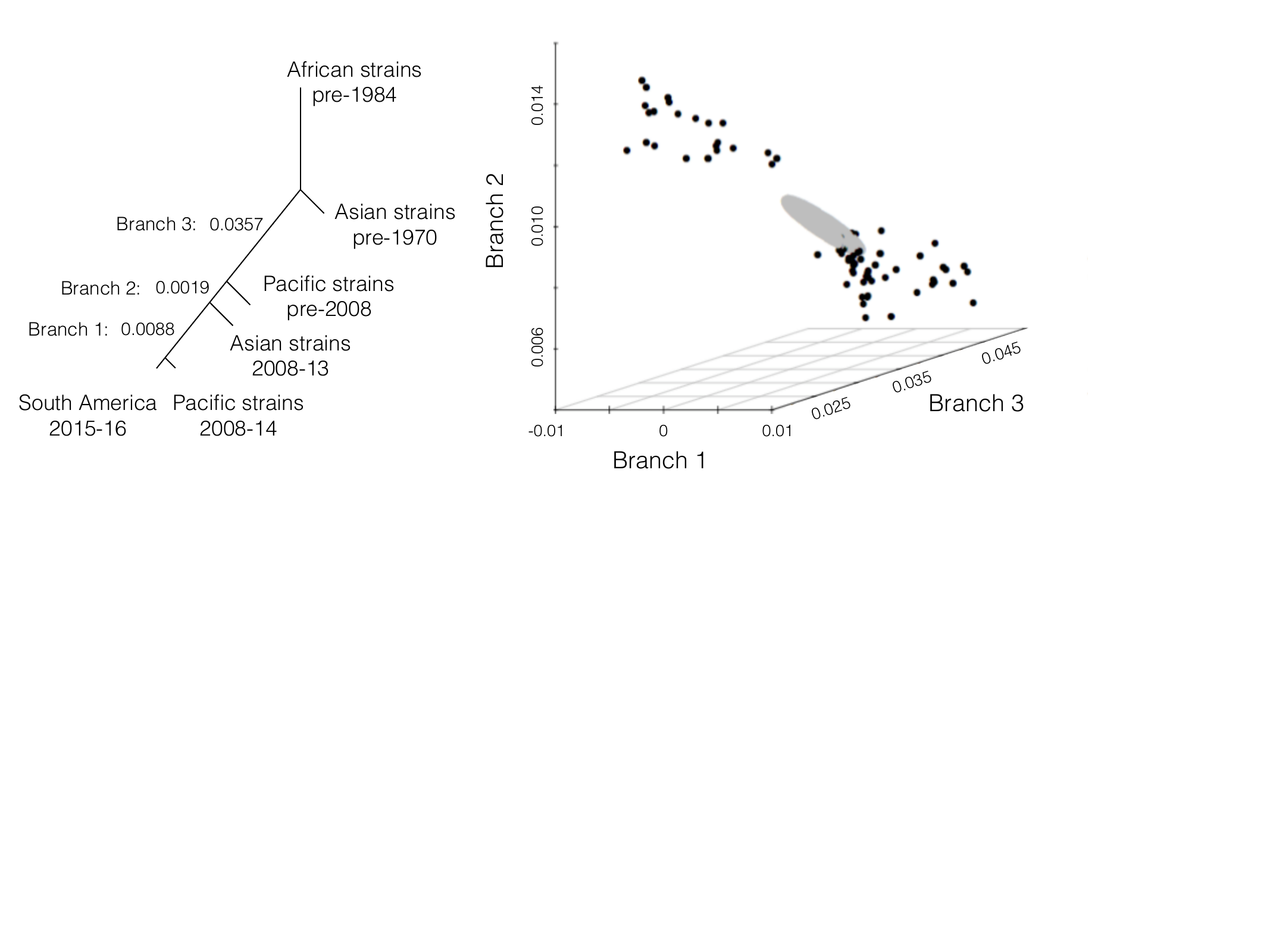}
\caption{(left) The  Fr\'echet  mean of 108 Zika phylogenies. Branch length units are substitutions per site. (right) The MLMs of the 108 phylogenies with respect to the sample  Fr\'echet mean (black points) and the 99.9\% confidence set for the MLM of the true Fr\'echet mean (grey ellipsoid). }
\label{zika_figure}
\end{figure}
\end{center}

There are 2 topologies present in the trees $T_1, \ldots, T_{108}$ (shown in the Supplementary Appendix). Due to the stratified structure of tree space, it is not possible for a confidence set constructed according to the procedure of Section \ref{procedure} to contain 2 topologies, even if only 2 topologies are represented in the data. This is because any confidence set that crosses an orthant boundary must contain all tree topologies that share that boundary (of which there are 3 or more, see Figure 1). Thus the confidence set may not exactly reflect the topologies present in the sample.

\subsection{HIV forensics}

In this example we investigate the hypothesis that two HIV-positive patients of a Floridian dentist with AIDS contracted HIV from the dentist. This question was formally investigated by the National Centre for Infectious Diseases  in 1992, culminating in a report concluding transmission to the patients from the dentist \citep{ou92}. The report considered several different analyses, including the within-patient HIV variation (HIV is known to mutate rapidly), and a phylogenetic analysis. Here we unify these two  distinct elements of the report into a single inferential method that accounts for both within- and across-patient variation of the virus.

Amino acid sequences from the V3  region of the HIV virus of the  dentist (8 replicates), patient A (6 replicates), patient B (14), a local control (2), and a non-local control (2) were obtained from GenBank. %(accession numbers M90847-M90854; M90855-M90860; M90861-M90869; M90914-M90915; AB981536-AB981536).
There are $8 \times 6 \times 14 \times 2 \times 2 = 2688$ ways to choose a single dentist sequence, patient A sequence, patient B sequence, local control sequence, and non-local control sequence. We randomly selected 100 of the 2688 combinations, and for each combination we  aligned the sequences using Clustal \citep{Larkin:2007hz}, and estimated the underlying tree using PhyML and a HKY model \citep{Guindon:2003ba}. We favor a simple model, acknowledging potential improvements from more complex models and tuned parameters but noting they are unlikely  to substantially affect the results of our investigation.

\begin{center}
\begin{figure}
\includegraphics[trim = 3cm 2cm 0cm 2cm, scale=0.6]{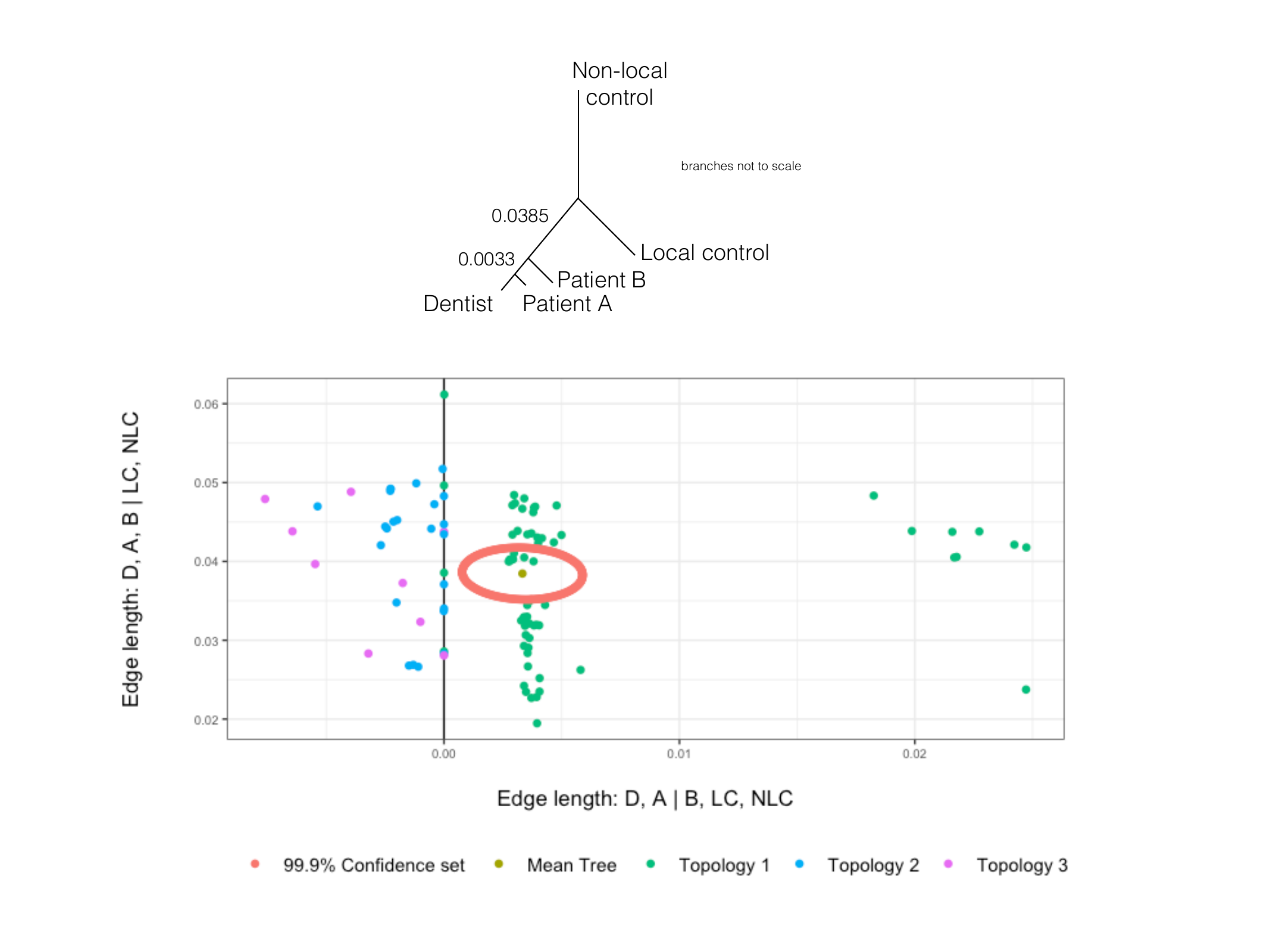}
\caption {The estimated phylogenies  of the HIV viruses of a dentist, two patients of the dentist, a control from the local population, and a control from a distinct population. The different phylogenies were obtained by permuting the representative sequences of each individual. Branch length units are substitutions per site. (top) The  Fr\'echet  mean of the 100 HIV phylogenies, which has Topology 1. (bottom) The MLMs of the 100 phylogenies with respect to the sample  Fr\'echet mean and the 99.9\% confidence set for the MLM of the true Fr\'echet mean. Points of the same color correspond to trees of the same topology. }
\label{hiv_figure}
\end{figure}
\end{center}

The projected trees under the sample MLM are depicted in Figure~\ref{hiv_figure}. The mean length of the branch separating the dentist and patients from the controls (Y-axis) is large relative to its variability. The null hypothesis that this edge is not present on the true tree is rejected with $p<10^{-10}$, and thus we conclude that the two patients contracted the virus from the dentist.  The remaining branch indicates the relative similarity of the dentist's sequences to those of patient A and patient B (which patient was infected closer to the date of blood sample collection), and we reject the null hypothesis that there is a leaf more closely related to the dentist than patient A ($p=10^{-4}$). These conclusions are consistent with \cite{ou92}. It is critical to note the simultaneous accounting for within and across patient similarity in this procedure, and thus it utilizes  the information available to the fullest extent and removes the need to separately consider intra/interperson sequence analysis as was necessary in the original investigation \citep{ou92}.

\subsection{Turtles:  Fr\'echet mean on a codimensional  stratum} \label{turtles}

\cite{Spinks:2013gq} investigated over-division of the {\it Pseudemys} ({\it P.})  genus of North American freshwater turtles, using a dataset of 86 turtles representing 13 taxa, of which 9 taxa represent subdivisions of the genus under question. 10 nuclear loci (6570 base pairs) and 3 mitochondrial genes (2209 base pairs) were used to build the trees, with full details regarding tree-building available in \citet[Section 2]{Spinks:2013gq}. A single turtle was chosen to represent each taxa and  the tree was built based on a single representative of each of the 13 taxa. This process was repeated 100 times to generate 100 trees, with each tree based on a different combination of turtle representatives of the taxa. The authors have kindly provided us with the 100 trees $T_1, \ldots, T_{100}$, which have 96 different topologies.

\begin{center}
\begin{figure}
\includegraphics[trim = 0cm 10cm 38cm 0cm, scale=0.75]{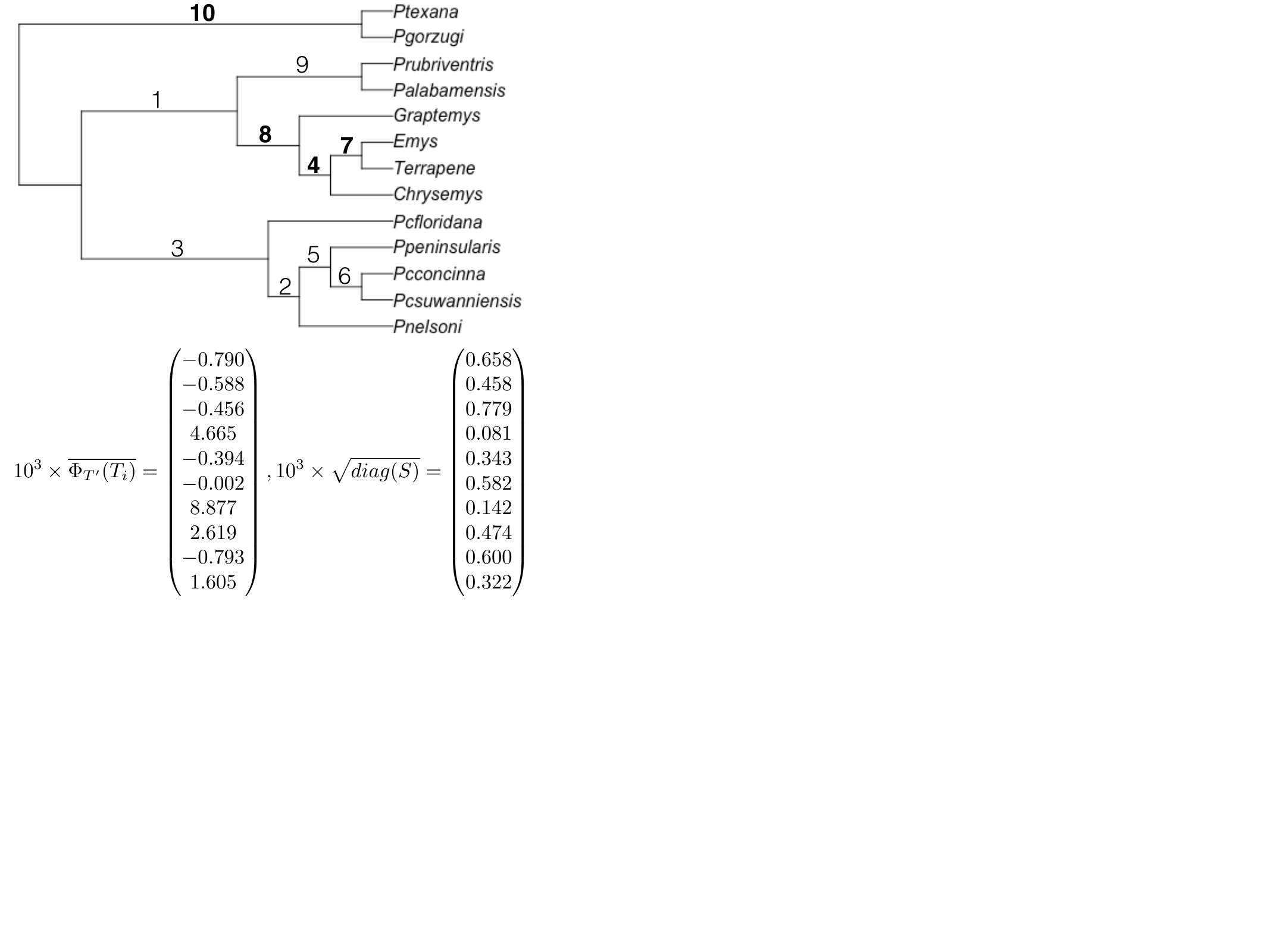}
\caption{The Fr\'echet mean of 100 turtle phylogenies is unresolved, and so a perturbation, $T'$ is shown (top, branch lengths not to scale). The rescaled sample mean and standard deviation of the log mapped trees $\Phi_{T'}(T_1), \ldots, \Phi_{T'}(T_{100})$ are also given (bottom). Each coordinate of $\Phi_{T'}(\cdot)$  corresponds to a branch on $T'$, and the branches are labelled on $T'$ in accordance with the order of the components of $\Phi_{T'}(\cdot)$. Branches 4, 7, 8 and 10 (in bold) correspond to MLM coordinates that are positive and large relative to their standard deviations, suggesting that these branches are strongly supported by $T_1, \ldots, T_{100}$. Branches 4, 7 and 8 correspond to outgroups (not shown).
}
\label{unresolved_figure}
\end{figure}
\end{center}

The sample mean of the 100 trees, which we call $\overline{T}$, falls on a stratum of codimension 4, and thus the formal inferential framework developed in Section \ref{procedure} cannot be applied. However, standard Euclidean multivariate data analysis tools can still be applied to the log mapped trees to assess the precision in estimating the branches. Define $T'$ to be the tree that is the weighted Fr\'echet mean of $\overline{T}$ with weight 0.95 and $T_1$ and weight 0.05. $T'$ is a tree close to $\overline{T}$ but in a top-dimensional stratum. Note that $T_1$ was chosen arbitrarily.

The tree $T'$ is shown in Figure \ref{unresolved_figure}, along with the mean ($\overline{\Phi_{T'}(T_i)}$) and standard deviation ($\sqrt{diag(S)}$) of $\Phi_{T'}(T_1), \ldots, \Phi_{T'}(T_{100})$. The branch labels on $T'$ correspond to the order of the coordinates of $\Phi_{T'}(\cdot)$. 4 branches have mean lengths that are more than 3 times larger than their standard deviations, indicating strong support for their presence on the true Fr\'echet mean tree (see histograms in Supplementary Appendix). Outgroups were introduced by \cite{Spinks:2013gq} to show relatives that are known to be taxonomically distinct, and 3 of the branches with strong support correspond to splits that separate the outgroups from the  {\it Pseudemys}  subgroups. The remaining branch of strong support corresponds to  {\it P. gorzugi}, corresponding identically with \citet[Table 3]{Spinks:2013gq}, which identifies this taxon to be taxonomically distinct.

The remaining 6 coordinates have negative means and large standard deviations, indicating that these branches are not consistently present on the trees $T_1, \ldots, T_{100}$. All other splits are highly questionable, and we concur with \cite{Spinks:2013gq} in concluding that the phylogenetic division of {\it Pseudemys} is oversplit, and the group should have  far fewer taxa than previously believed. This example illustrates that the log map is still a useful tool even when Fr\'echet means are not resolved, and can be used to quickly determine if a split is strongly or weakly supported by a collection of trees.

\section{Discussion} \label{discussion}
%A number of key considerations are discussed before concluding the manuscript.

%\subsection{Simulations versus biological datasets} \label{simulation}
%We have deliberately chosen to investigate biological data sets over stimulated data  for three reasons. Firstly, the asymptotic theory upon which this method is based has already been proven \cite{blow2}, and thus there is no need to verify tree mean normality under simulation. Secondly, results treating rates of convergence to multivariate normality  for Euclidean means have already been well established in the statistical literature \cite{tim_multivariate},  and we expect equivalent results given that our tree-mean is a Euclidean object. Finally, at this point in time the literature lacks realistic distributions on tree space {\it that are parameterized a priori} of the data (thus we exclude \cite{Benner:2014uy, kdetrees}). For this reason we conclude that simulations would either require unrealistic simplifications of plausible distributions,  or  require a depth of investigation beyond the scope of this paper. Convergence rates  of the sample Fr\'{e}chet mean tree to the true Fr\'{e}chet mean tree under different  models of evolution is an area of continuing research by the author, and, given the large number of competing models in existence, will be treated in another manuscript.

\subsection{Degeneracy} \label{boundary}
It is important to note that the asymptotics differ when the true Fr\'{e}chet mean falls on a stratum of codimension 1 \citep{blow2, barden2017}. The MLM remains multivariate normal on the branches whose means do not correspond to co-faces, with the co-facing branches converging to either a degenerate distribution or a truncated multivariate normal distribution \citep{blow2, hotz}.
Reduced variability in codimension asymptotics lead us to expect faster convergence to the degenerate branches, justifying our use of the usual asymptotics in the examples investigated in this manuscript.

\subsection{Extension to incorporate tree uncertainty} \label{uncertainty}
It is important to note that the trees that we consider as data points will generally be estimated, not known exactly \citep{holmes2003statistics}. Thus inherent in each observation is a possibly differing measure of certainty. We conjecture that this could be incorporated into the above procedure using the statistical framework of metaanalysis %to account for measurement error, similar to a recent innovation in modeling biodiversity
\citep{betta1}, %This will result in a more realistic framework for modelling the asymptotics of the log map, though an additional variance component needs to be incorporated into the normality proof and thus this is not a trivial extension of the procedure described here.
and we are continuing research in this direction.

\subsection{Sources of tree-valued observations} \label{sources}
Our examples in this paper have exclusively focused on using within--species variability to more accurately reflect  variability in genetic data. However, many different processes give rise to phylogenetic tree--valued observations that could be used as inputs to this method. Gene trees, where each tree represents the phylogeny of a different location on the genome, provide another natural source of variability.
However, collections of gene trees may contain outlying trees,  and outliers should be removed before the confidence set is constructed. %However, there are two key considerations when using gene trees in the above method. Firstly, it may not be a plausible assumption that gene trees would be observed from the same distribution, because ``non tree--like'' mechanisms may give rise to outlying gene trees. In these cases we advocate instead the procedures of \cite{kdetrees}, which was specifically invented as an outlier detection method, and \cite{nye11,Nye:2014eh}, which can be used to see the directions in which trees vary and gives a different perspective on outliers. Secondly, the above results rely on asymptotics: an evil necessitated because exact tree mean distributions are intractable in all nontrivial cases. It is most common to see 10 or fewer gene trees  in phylogenetic studies, and it is unrealistic to think that asymptotic results provide utility with these sample sizes. Nevertheless, as sequencing continues to  decrease in price, we can expect larger gene tree samples and the possibility of sufficiently large gene samples to use this method in the gene tree after outlier removal. %It is possible to adapt the procedure to account for within-species gene tree variability, thus combining the gentry idea with the within species variability idea examined here, though this needs to be developed in conjunction with theoretical results as described in section~\ref{uncertainty}.

The method of \cite{bbb}, which was the first literature utilizing Fr\'{e}chet means to find consensus trees, used Fr\'{e}chet means and (scalar) Fr\'{e}chet variance to analyze trees generated by Bayesian MCMC draws. Using our multivariate extension above, it is possible to consider multiple directions of variability in this context.  %Investigating the utility of multiple directions of variability to identify sources of model misspecification in the Bayesian context \citep{reid} is another promising area of future research.

\subsection{Tree covariance} \label{covariance}
%Having discussed the implications of violations to the assumption that observations are identically distributed in Section \ref{sources}, we now discuss the assumption of independence.
If tree--building information for $n$ different individuals from each of the $m$ taxa on the tree were be obtained, each individual from each taxa could be used exactly once to build $n$ independent trees. However, when differing numbers of individuals are obtained from each taxa, a choice must be made between discarding information (to equalize the number individuals from each group) or inducing dependence by repeating some individuals when building the trees. In Section \ref{examples}, we chose the latter option. This issue can be observed in Figures \ref{zika_figure}  and \ref{hiv_figure} via the clusters of the log mapped trees. Unfortunately, modeling dependence and incorporating it into the covariance estimate $\hat{\Sigma}$ is extremely challenging, because  the extent of dependence between two trees $T_i, T_j \in \tmm$ depends both  on the number of shared individuals used to build the trees, and also how closely  related these individuals are on the tree (a function of the unknown true tree $T^*$). We conjecture that ignoring this dependence is a second--order issue compared to  violations of identicality and  ignoring uncertainty in estimating the trees (Section \ref{uncertainty}), but investigation of this conjecture is an ongoing project.

\section{Concluding remarks} \label{conc}
The framework discussed here for representing collections of trees as points in Euclidean space opens phylogenetic tree analysis to many of the methods that have been developed for Euclidean space.
However, if the sample contains trees with conflicting topologies, the information regarding the topologies of the trees will be lost when the trees are considered as vectors (using  the log map).  For example, even if only 2 topologies are represented in the sample, the confidence set may span 3 or more orthants. For this reason, judgement should be exercised when deciding whether to analyze and visualize trees in their native space or under the log map in the more interpretable Euclidean space.

This method advances recent innovations with respect to describing tree-valued centre and variability, taking much inspiration from \cite{nye11} and \cite{bbb}.  We believe the most important progress made in this manuscript is the application of the statistical framework of variance {\it modeling}, rather than minimizing, to tree space. The proposal for using species replicates to generate collections of trees for summary and analysis may also prove fruitful by providing realistic measures of tree uncertainty. This is a known issue in phylogenetics  and we hope that the sampling method and the confidence set construction procedure described here contributes to a better understanding of both of these issues.

\section*{Software} The trees and scripts used to generated Table \ref{coverage_table}, as well as R  scripts for analyzing the data in  Section \ref{examples}, are available from github. R packages {\it ape, Rcmdr, scatterplot3d, rgl} and {\it mgcv} were used for the analysis and visualization, and we are grateful to the authors of these packages for their distribution.

\section*{Acknowledgements}
The author is indebted to Professor Tom  Nye of Newcastle University, who most kindly made available the code that underpins his PCA methods \citep{nye11,Nye:2014eh}, which assisted immensely in the implementation of the log map function. Many thanks to John Bunge for suggestions that significantly clarified the exposition;  Phil Spinks for the turtle trees of Section~\ref{turtles}; Megan Owen for many careful and constructive comments on an earlier draft; Giles Hooker and Marty Wells for funding that enabled completion of the project; and Sidney Resnick and Louis Billera for their support of the investigation and helpful suggestions at a formative stage. Two referees' thoughtful comments substantially improved every component of this manuscript and their time and care spent reviewing is very much appreciated.

\bibliographystyle{agsm}

\bibliography{tree-bibliography}
\end{document}